\newtheorem{theorem}{Theorem}\newtheorem{lemma}{Lemma}
\newtheorem{definition}{Definition}\newtheorem{proposition}{Proposition}
\newtheorem{corollary}{Corollary}\newtheorem{claim}{Claim}
\def\squarebox#1{\hbox to #1{\hfill\vbox to #1{\vfill}}}
\def\qed{\hspace*{\fill}        \vbox{\hrule\hbox{\vrule\squarebox{.667em}\vrule}\hrule}\smallskip}
\newenvironment{proof}{\begin{trivlist}
  \item[\hspace{\labelsep}{\em\noindent Proof.~}]  }{\qed\end{trivlist}}
 \newcommand{\bs}{\bigskip} 
 \newcommand{\n}{\noindent} 
 \newcommand{\hs}[1]{\hspace*{ #1 mm}} 
\newenvironment{proofof}[1]{\vspace*{5mm} \par \noindent
         {\em Proof of #1.\hs{2}}}{\hfill$\Box$ \vspace*{3mm}}
\newcommand{\ignore}[1]{}
 \newcommand{\bs}{\bigskip}  \newcommand{\n}{\noindent}  \newcommand{\hs}[1]{\hspace*{ #1 mm}} 
 \def\bbox{\vrule height6pt width6pt depth1pt}\theoremstyle{plain}\theoremheaderfont{\bfseries}\setlength{\theorempreskipamount}{3mm}\setlength{\theorempostskipamount}{3mm}
 \newtheorem{theorem}{Theorem}[section] \newtheorem{lemma}[theorem]{Lemma} \newtheorem{proposition}[theorem]{Proposition}     \newtheorem{definition}{Definition} \newenvironment{proof}{\par \noindent            {\bf Proof. \hs{2}}}{\hfill$\Box$ \vspace*{3mm}} 
\newcommand{\polylog}[1]{\mathrm{polylog}(#1)}
\begin{document}
\pagestyle{plain}
\begin{center}
{\Large {\bf On QMA Protocols with Two Short Quantum Proofs}} 
\bs\

{\sc Fran\c{c}ois Le Gall}$^1$ \hspace{5mm}
{\sc Shota Nakagawa}$^2$  \hspace{5mm}
{\sc Harumichi Nishimura}$^2$ \hspace{5mm} 

\

{\small
$^1${Graduate School of Information Science and Technology, The University of Tokyo, Japan}; {\tt legall@is.s.u-tokyo.ac.jp} 

$^2${Graduate School of Science, Osaka Prefecture University, Japan}; {\tt $\left\{\right.$mu301012@edu,hnishimura@mi.s$\left.\right\}$.osakafu-u.ac.jp}
}

\end{center}
\bs

\n{\bf Abstract.}\hs{1} 
This paper gives a QMA (Quantum Merlin-Arthur) protocol for 3-SAT 
with two logarithmic-size quantum proofs (that are not entangled with each other) 
such that the gap between the completeness and the soundness is $\Omega(\frac{1}{n\polylog{n}})$. 
This improves the best  completeness/soundness gaps known for NP-complete problems in this setting.

\section{Introduction} 

The quantum complexity class \textbf{QMA} \cite{KSV02,Kni96,Wat00} is a quantum analogue of the 
complexity class \textbf{NP} (or of the class \textbf{MA}). That is, a decision problem is in \textbf{QMA} 
if there is a polynomial-time quantum algorithm $V$ (called the {\em verifier}) 
that satisfies the following two properties: 
{\tt (completeness)} $V$ accepts any yes-instance with probability $\geq a$ 
by the help of a quantum state (called a {\em quantum proof}); 
{\tt (soundness)} $V$ accepts any no-instance with probability $\leq b$ ($<a$) whatever quantum state is provided. 
Bounding from below the gap between completeness and soundness $a-b$ 
by a positive constant (or an inverse polynomial) is enough
since efficient gap amplification is possible (see, e.g.,~\cite{KSV02}). 

Several variants of \textbf{QMA}, whose classical counterparts are uninteresting, have been introduced in the literature. 
One variant is the case where the verifier receives multiple quantum proofs 
that are unentangled with one another, 
which was first considered by Kobayashi, Matsumoto, and Yamakami~\cite{KMY}.
Unexpectedly from the classical case, multiple quantum proofs may be more helpful than one proof 
since the verifier can use the fact that these proofs are not entangled 
to improve the soundness. In fact, Liu, Christandl, and Verstraete~\cite{LCV} found a problem 
that can be verified in quantum polynomial time using multiple quantum proofs but is not known to be in \textbf{QMA}. 
Recently, Harrow and Montanaro~\cite{HM10} showed that two quantum proofs 
are enough to obtain the full power of multiple quantum proofs by proving 
that efficient gap amplification is possible (note that it was shown before that the number of quantum proofs
can be reduced to two if and only if efficient gap amplification is possible \cite{ABDFS,KMY}). 
Another variant is the case where the verifier receives only a logarithmic-size 
quantum proof. Marriott and Watrous~\cite{MW} showed that, similarly to the classical case, a logarithmic-size 
quantum proof is useless, that is, such a variant 
of \textbf{QMA} collapses to \textbf{BQP}, by proving that efficient gap amplification, 
where the proof must be kept to be logarithmic-size, is possible.

A combination of the above two variants (multiple quantum proofs 
with logarithmic length) was first studied by Blier and Tapp \cite{BT}. 
They showed that an NP-complete problem such as the 3-coloring problem 
can be verified in quantum polynomial time only using two quantum proofs 
with logarithmic length, while the gap between completeness and soundness is an inverse polynomial 
(note that it is unknown whether efficient gap amplification is possible). 
Moreover, Aaronson et al.~\cite{ABDFS} showed that 3-SAT can be efficiently verified 
with a constant completeness/soundness gap using $O(\sqrt{n}\polylog{n})$ quantum proofs, 
each proof being of logarithmic length. 
These results thus give new evidences that multiple quantum proofs may be helpful. 

This paper focuses on how much the completeness/soundness gap can be improved in QMA protocols 
using two quantum proofs with logarithmic length for NP-complete problems.
The gap obtained by Blier and Tapp was $\Omega(\tfrac{1}{n^6})$. 
After that, Beigi \cite{Bei} improved the gap to $\Omega(\tfrac{1}{n^{3+\epsilon}})$ for 3-SAT, 
where $\epsilon>0$ is any constant. In the present work we further improve the gap 
to $\Omega(\tfrac{1}{n\polylog{n}})$ for 3-SAT.    
  
Independently of us, Chiesa and Forbes \cite{CF11} also improved 
the completeness/soundness gap of QMA protocols with two logarithmic-size quantum proofs. 
They showed that the gap of the Blier-Tapp protocol can be improved to $\Omega(\tfrac{1}{n^2})$ 
by tightening the analysis. 
(In fact, two of the authors obtained the same conclusion before the present work \cite{NN10}.)
The reason why our gap is better is simple: we combine the Blier-Tapp protocol 
with Dinur's PCP reduction \cite{Din}. However, we then need a complicated case-study analysis 
different from the one of \cite{BT}, while the analysis in \cite{CF11,NN10} basically follows~\cite{BT}.

\section{Preliminaries}\label{sec2}

In this section, we present technical tools that are used to obtain our result. 
All of the tools have already been used previously \cite{BT,CD} for studying 
QMA protocols using multiple quantum proofs with logarithmic length 
but we state them for self-containedness.

The first group of our tools, which was used in \cite{BT}, consists of 
the distance between (pure) quantum states, the distance between probability distributions, 
the relation between their distances, and a basic fact on the swap test \cite{BCWW}.

\begin{definition}\label{def21}
(Quantum distance) $D(|\Psi\rangle,|\Phi\rangle):= \sqrt{1- {\bigl|\langle\Psi|\Phi\rangle \bigr| }^{2}}$.
\end{definition}

\begin{definition}\label{def22}
(Classical distance) Let $P=\left\{ p_1, \ldots ,p_k \right\}$ 
and $Q=\left\{ q_1, \ldots ,q_k \right\}$ be two probability distributions.
Then, $D(P,Q):= \frac{1}{2} \sum^{k}_{i=1} \left| p_i - q_i \right|$.
\end{definition}

\begin{theorem}\label{thm21}
(Relationship between the quantum and classical notions of distance \cite{NC}) 
Let $M$ be a POVM measurement. Let $P$ and $Q$ be the distributions of outcomes 
when performing $M$ on $|\Psi\rangle$ and $|\Phi\rangle$, respectively.
Then, $D(|\Psi\rangle,|\Phi\rangle) \ge D(P,Q)$.
\end{theorem}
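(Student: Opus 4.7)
The plan is to route through the trace distance $T(\rho,\sigma) := \tfrac{1}{2}\|\rho - \sigma\|_1$ between density operators, since it simultaneously matches $D(|\Psi\rangle,|\Phi\rangle)$ for pure states and upper bounds the classical distance of the outcome distribution of any POVM. Concretely, I would establish the chain
\[
D(P,Q) \;\le\; T\bigl(|\Psi\rangle\langle\Psi|,\,|\Phi\rangle\langle\Phi|\bigr) \;=\; D(|\Psi\rangle,|\Phi\rangle),
\]
and prove the equality and the inequality independently.

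For the equality on the right, I would restrict attention to the two-dimensional subspace spanned by the two states and write $|\Phi\rangle = \alpha|\Psi\rangle + \beta|\Psi^\perp\rangle$ with $|\alpha|^2+|\beta|^2=1$, so that $|\alpha| = |\langle\Psi|\Phi\rangle|$. In the basis $\{|\Psi\rangle,|\Psi^\perp\rangle\}$, the operator $|\Psi\rangle\langle\Psi|-|\Phi\rangle\langle\Phi|$ is a traceless Hermitian $2\times 2$ matrix whose determinant is $-|\beta|^2 = -(1-|\langle\Psi|\Phi\rangle|^2)$. Its eigenvalues are therefore $\pm\sqrt{1-|\langle\Psi|\Phi\rangle|^2}$, and $T = \sqrt{1-|\langle\Psi|\Phi\rangle|^2}$, which is exactly $D(|\Psi\rangle,|\Phi\rangle)$ by Definition~\ref{def21}.

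For the inequality on the left, I would take the Jordan decomposition $\rho - \sigma = P - N$ with $P,N$ positive semidefinite and supported on orthogonal subspaces, so that $\|\rho-\sigma\|_1 = \mathrm{tr}(P)+\mathrm{tr}(N)$. Since $\mathrm{tr}(\rho-\sigma)=0$, we get $\mathrm{tr}(P)=\mathrm{tr}(N)=T(\rho,\sigma)$. For every POVM element $M_i$, the triangle inequality yields $|p_i - q_i| = |\mathrm{tr}(M_i(P-N))| \le \mathrm{tr}(M_iP)+\mathrm{tr}(M_iN)$; summing over $i$ and using $\sum_i M_i = I$ gives $2D(P,Q) \le \mathrm{tr}(P)+\mathrm{tr}(N) = 2T(\rho,\sigma)$.

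The step that needs the most care is the second one: one must use that the Jordan decomposition has orthogonal supports (so that $\|\rho-\sigma\|_1$ really equals $\mathrm{tr}(P+N)$) and then exploit the trace-zero property to identify both $\mathrm{tr}(P)$ and $\mathrm{tr}(N)$ with $T(\rho,\sigma)$. Without this observation, one would only recover $D(P,Q)\le\|\rho-\sigma\|_1$ and lose the factor of two needed for the clean statement cited from~\cite{NC}.
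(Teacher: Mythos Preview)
Your argument is correct and is essentially the standard proof from Nielsen and Chuang~\cite{NC}: identify the trace distance with $D(|\Psi\rangle,|\Phi\rangle)$ for pure states via a direct $2\times 2$ eigenvalue computation, and then show trace distance contracts under any POVM via the Jordan decomposition. The paper itself does not supply a proof of Theorem~\ref{thm21}; it merely cites it as a tool, so there is no ``paper's own proof'' to compare against. One minor cosmetic point: you use the letter $P$ both for the outcome distribution and for the positive part of the Jordan decomposition, which could confuse a reader; renaming the latter (e.g., writing $\rho-\sigma = S - N$ or $(\rho-\sigma)_+ - (\rho-\sigma)_-$) would remove the ambiguity without affecting the argument.
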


\begin{theorem}\label{thm22}
(\textit{Swap test}~\cite{BCWW}) 
When performing the swap test
on $|\Psi\rangle$ and $|\Phi\rangle$, the probability that the test outputs NO 
(which means the two states are not equal) is 
$\frac{1}{2}-\frac{{\left| \langle \Psi | \Phi \rangle \right|}^{2}}{2}$.
\end{theorem}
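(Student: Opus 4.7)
The plan is to carry out the standard circuit-level calculation for the swap test, which consists of a Hadamard on a fresh ancilla qubit initialized to $|0\rangle$, a controlled-\textsc{swap} on the two registers holding $|\Psi\rangle$ and $|\Phi\rangle$ (with the ancilla as control), a second Hadamard on the ancilla, and finally a measurement of the ancilla in the computational basis, where outcome $1$ is declared NO.

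First I would track the state through the circuit. After the initial Hadamard the joint state is $\tfrac{1}{\sqrt{2}}(|0\rangle+|1\rangle)\otimes|\Psi\rangle\otimes|\Phi\rangle$, and after the controlled-\textsc{swap} it becomes $\tfrac{1}{\sqrt{2}}\bigl(|0\rangle|\Psi\rangle|\Phi\rangle+|1\rangle|\Phi\rangle|\Psi\rangle\bigr)$. Applying the second Hadamard to the ancilla rewrites this as
\[
\tfrac{1}{2}|0\rangle\otimes\bigl(|\Psi\rangle|\Phi\rangle+|\Phi\rangle|\Psi\rangle\bigr)+\tfrac{1}{2}|1\rangle\otimes\bigl(|\Psi\rangle|\Phi\rangle-|\Phi\rangle|\Psi\rangle\bigr).
\]

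Next I would read off the probability of obtaining $1$ as the squared norm of the $|1\rangle$-branch, namely $\tfrac{1}{4}\bigl\||\Psi\rangle|\Phi\rangle-|\Phi\rangle|\Psi\rangle\bigr\|^{2}$. Expanding the inner product and using $\langle\Psi|\Psi\rangle=\langle\Phi|\Phi\rangle=1$ together with $|\langle\Psi|\Phi\rangle|^{2}=\langle\Psi|\Phi\rangle\langle\Phi|\Psi\rangle$, this squared norm equals $2-2|\langle\Psi|\Phi\rangle|^{2}$, so the probability of outcome NO is $\tfrac{1}{2}-\tfrac{|\langle\Psi|\Phi\rangle|^{2}}{2}$, as claimed.

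There is no real obstacle here: the statement is a direct consequence of the circuit identity and a one-line inner-product computation. The only thing to be careful about is that $|\Psi\rangle$ and $|\Phi\rangle$ are implicitly assumed to be pure states of equal dimension (so that \textsc{swap} is well defined); the argument above then goes through verbatim, and no further reduction or appeal to Theorem~\ref{thm21} is needed.
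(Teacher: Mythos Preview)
Your argument is correct and is exactly the standard derivation of the swap-test acceptance probability. Note, however, that the paper does not actually give its own proof of this statement: Theorem~\ref{thm22} is stated in the preliminaries as a known fact, with a citation to~\cite{BCWW}, and is then used as a black box in the soundness analysis. So there is nothing to compare against; your write-up simply supplies the (short, well-known) proof that the paper omits.
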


The second group of our tools is from Dinur's PCP theorem~\cite{Din}, 
which was used in~\cite{CD}. We present necessary terminologies 
and Dinur's PCP reduction, following the description given in \cite{CD}.

\begin{definition}\label{def23}
(Constraint graph) A {\em constraint graph} $G=(V(G),E(G))$ is an undirected graph (possibly with self-loops) 
along with a set $\Sigma$ of ``colors'' and mappings $R_{e}: \Sigma \times \Sigma\to\{0,1\}$ 
for each edge $e=(v,u)\in E(G)$ (called the {\em constraint} to $e$). 
A mapping $\tau: V(G) \to \Sigma$ (called a {\em coloring}) satisfies the constraint $R_{e}$ 
if $R_{e}(\tau(v),\tau(u)) = 1$ for an edge $e =(v,u) \in E(G)$. 
The graph $G$ is said to be {\em satisfiable} if there is a coloring $\tau$ 
that satisfies all constraints, while $G$ is said to be {\em $(1-\eta)$-unsatisfiable}  
if for all colorings $\tau$, the fraction of constraints satisfied by $\tau$ is at most $1 - \eta$.
\end{definition}

\begin{theorem}\label{thm23}~\cite{Din} 
There exists a mapping $T$ from {3-SAT} instances to constraint graphs with the following properties.
\begin{itemize}
\item (Completeness) If $\phi$ is a satisfiable formula, $T(\phi)$ is a satisfiable constraint graph.
\item (Soundness) There exists an absolute constant $\eta > 0$ such that 
if $\phi$ is unsatisfiable formula, $T(\phi)$ is $\left( 1 - \eta \right)$-unsatisfiable.
\item (Size-Efficiency) If $\phi$ has $m$ clauses, then $\left| V(T(\phi)) \right|= O(m \polylog m)$ 
and $\left| E \left( T(\phi) \right) \right| = O(m \polylog m)$. (The value $\left| V(T(\phi)) \right|$ will usually
be denoted in this paper by $n$.)
\item (Alphabet Size) $\left| \Sigma \right|= K$ $>1$ is a constant independent of $m$.
\item (Regularity) $T(\phi)$ is a $d$-regular graph (with self-loops), where $d$ is a constant independent of $m$.
\end{itemize}
\end{theorem}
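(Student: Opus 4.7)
The plan is to follow Dinur's combinatorial proof of the PCP theorem, which realizes $T$ by iterating a gap-amplification procedure $O(\log m)$ times. First, reduce 3-SAT to a ``base'' constraint graph $G_0$ in the trivial way: one vertex per variable, and for each clause a constraint (edge or small gadget) over a constant-size alphabet. Unsatisfiability of $\phi$ then implies $G_0$ is at least $\Omega(1/m)$-unsatisfiable, while $|V(G_0)|,|E(G_0)| = O(m)$. Each subsequent round applies three operations to the current graph $G$: (i) a \emph{preprocessing} step making $G$ a constant-degree expander while preserving unsatisfiability up to a constant factor; (ii) a \emph{gap-amplification} step that multiplies the unsatisfiability by a constant (at the cost of blowing up the alphabet); (iii) an \emph{alphabet-reduction} step that shrinks the alphabet back to a fixed constant $K$ at the cost of another constant factor in the gap. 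After $O(\log m)$ rounds the gap reaches an absolute constant $\eta$, while the graph size remains $O(m\,\polylog{m})$ since each round multiplies the size by only a constant.

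The core of the argument is step (ii), gap amplification by \emph{graph powering}. Given a $d$-regular expander $G$ over alphabet $\Sigma$, define $G^t$ on the same vertex set: edges correspond to length-$t$ walks in $G$, and the new alphabet $\Sigma^{d^{\lceil t/2 \rceil}}$ forces each vertex $v$ to commit to values at all vertices within distance $\lceil t/2\rceil$. The constraint on a walk checks that its endpoints agree on all vertices reachable from both, and that every edge constraint of $G$ visited by the walk is satisfied. The key lemma is that for an appropriate constant $t=t(d)$, the unsatisfiability of $G^t$ is at least $\min(2\eta_G,\eta_{\max})$ for some absolute constant $\eta_{\max}$. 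One proves this by plurality-decoding any alleged assignment $\sigma^{*}$ for $G^t$ into an assignment $\sigma$ for $G$, then combining the spectral expansion of $G$ with a Chernoff-type analysis of random walks to lower bound the probability that a random walk hits an edge violated by $\sigma$ while at the same time being ``consistent enough'' with $\sigma^{*}$ that the induced walk constraint fails.

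The main obstacle is that powering blows the alphabet up to $\Sigma^{d^{\Theta(t)}}$, which would prevent further iteration. This is resolved by composition with an inner \emph{assignment tester}: a constant-size PCP verifier over a fixed alphabet of size $K$ that replaces each large-alphabet constraint in $G^t$ by a small local gadget, preserving satisfiability exactly and losing only a constant factor in unsatisfiability. Constructing such an assignment tester from scratch (e.g.\ from a Hadamard-based constant-query PCP) is delicate but now standard. The remaining work is to verify that re-running the preprocessing in each round keeps the graph regular of constant degree $d$, that composition preserves constant alphabet $K$, and that the reduction remains polynomial-time computable throughout. The output is then a mapping $T$ satisfying completeness, $(1-\eta)$-unsatisfiability, constant alphabet $K$, constant degree $d$, and $|V(T(\phi))|,|E(T(\phi))|=O(m\,\polylog{m})$, as stated.
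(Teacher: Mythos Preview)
The paper does not give its own proof of this theorem: it is stated in the Preliminaries as a known result and attributed to Dinur~\cite{Din}, following the formulation in~\cite{CD}. Your sketch is a faithful high-level outline of Dinur's gap-amplification proof (preprocessing to a constant-degree expander, graph powering to double the gap, alphabet reduction via assignment testers, iterated $O(\log m)$ times), so it is correct and coincides with the source the paper cites.
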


\section{Our Result}\label{sec:main}

We first recall the formal definition of the quantum complexity class $\mathrm{\textbf{QMA}}_{\log}(2,a,b)$, 
which is the set of languages that can be verified in quantum polynomial time 
using two logarithmic-size quantum proofs. 
In what follow, let $\mathcal{H}_{\ell}={\rm span} \{ |0\rangle, |1\rangle, \ldots, |\ell-1\rangle\}$ for any value $\ell\ge 1$.

\begin{definition}
A language $L$ is in $\mathbf{QMA}_{\log}(2,a,b)$ 
if there exists a polynomial-time quantum algorithm $V$ (verifier)
and a constant $c$ such that for any $n$ and any instance $x$ of length $n$ 
the following two conditions hold:
\begin{description}
\item[(Completeness)] 
If $x \in L$, there exists a state $|\Psi\rangle\otimes|\Phi\rangle \in 
\left( \mathcal{H}_2^{c\log \left( n \right) } \right)^{\otimes 2}$ (two quantum proofs) 
such that $V$ accepts with probability at least $a$. 
\item[(Soundness)] 
If $x \notin L$, then for all states $|\Psi\rangle\otimes|\Phi\rangle\in 
\left( \mathcal{H}_2^{c\log \left( n \right) }  \right)^{\otimes 2}$, 
the probability that $V$ accepts is at most $b$.
\end{description}
\end{definition}

Our result is the following theorem where a {3-SAT} instance has $n$ clauses.

\begin{theorem}\label{thm1}
3-SAT is in $\mathbf{QMA}_{\log}\left(2,1,1 -\Omega( \tfrac{1}{n\polylog n} )\right)$.
\end{theorem}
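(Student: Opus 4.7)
The plan is to combine the Blier--Tapp protocol with Dinur's PCP reduction (Theorem~\ref{thm23}). Given a 3-SAT formula $\phi$ with $n$ clauses, the verifier first applies the mapping $T$ in classical preprocessing to obtain a $d$-regular constraint graph $G = T(\phi)$ on $N = O(n\,\polylog n)$ vertices over an alphabet $\Sigma$ of constant size $K$; by Theorem~\ref{thm23}, if $\phi$ is satisfiable then so is $G$, while if $\phi$ is unsatisfiable then $G$ is $(1-\eta)$-unsatisfiable for an absolute constant $\eta>0$. The verifier then expects two proofs in (a subspace of) $\mathcal{H}_N\otimes\mathcal{H}_K$ of the form
$$|\Psi\rangle \;=\; \frac{1}{\sqrt{N}}\sum_{v\in V(G)}|v\rangle|\tau(v)\rangle,$$
where $\tau$ is a satisfying coloring of $G$; each proof uses $\log N+\log K=O(\log n)$ qubits, so the protocol fits in $\mathbf{QMA}_{\log}(2,\cdot,\cdot)$.

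The verifier runs one of three subtests chosen uniformly at random: a swap test on the two proofs; a \emph{consistency test} that measures both proofs in the computational basis and rejects iff the two vertex labels agree but the two color labels differ; and a \emph{constraint test} that measures both proofs in the computational basis, obtaining $(v_1,c_1)$ and $(v_2,c_2)$, and rejects iff $(v_1,v_2)\in E(G)$ and $R_{(v_1,v_2)}(c_1,c_2)=0$. Completeness with parameter $1$ is immediate for the honest strategy above: the two copies are identical (swap test), a given vertex always carries a unique color (consistency test), and $\tau$ satisfies every constraint (constraint test).

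The heart of the argument, and where the complicated case analysis mentioned in the introduction arises, is soundness. Suppose the verifier accepts with probability at least $1-\varepsilon$ on an unsatisfiable $\phi$; the goal is to derive a contradiction whenever $\varepsilon=o(1/(n\,\polylog n))=o(1/N)$. Theorems~\ref{thm21} and~\ref{thm22} allow a small swap-test rejection probability to be converted into near-equality of the two proofs as pure states, so it suffices to analyze a single state $|\Psi\rangle=\sum_{v,c}\alpha_{v,c}|v\rangle|c\rangle$. The consistency test forbids significant mass on pairs $(v,c),(v,c')$ with $c\ne c'$ and so extracts an \emph{implicit coloring} $\tau(v):=\mathrm{argmax}_c|\alpha_{v,c}|^2$ supported on almost all of the vertex marginal. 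The constraint test, combined with Dinur's constant gap $\eta$, must then catch an unsatisfied edge for this implicit coloring with probability roughly proportional to $\eta\cdot d/N$, i.e.\ $\Omega(1/(n\,\polylog n))$. The delicate step, and the reason the analysis must differ from that of~\cite{BT,CF11,NN10}, lies in the adversarial regime where the prover concentrates amplitude on a small subset of vertices: because Dinur's PCP gap $\eta$ is a \emph{constant}, we can afford to lose only a single factor of $N$ to the flatness of the vertex marginal, and the case split must simultaneously control this flatness, the mass located on the implicit coloring, and the cross-terms that the swap and consistency tests alone do not constrain. Balancing these cases is where I expect the main technical work to reside.
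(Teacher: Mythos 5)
There is a genuine gap: your protocol omits the uniformity test, and without it the protocol is simply unsound --- no case analysis can rescue it. Consider a cheating prover who sends $|\Psi\rangle=|\Phi\rangle=|v_0\rangle|c_0\rangle$ for a single fixed vertex $v_0$ and color $c_0$. The swap test passes with certainty (the states are identical), your consistency test passes with certainty (the same vertex always carries the same color), and your constraint test can only ever sample the pair $(v_0,v_0)$, i.e.\ the self-loop at $v_0$; since in Dinur's construction the self-loops are added purely for regularity and carry trivially satisfiable constraints, the verifier accepts a no-instance with probability $1$. More generally, nothing in your three tests constrains the vertex marginal $\{|\alpha_v|^2\}$ at all: the prover may concentrate all amplitude on any independent set (or any induced subgraph admitting a locally consistent coloring), and the heuristic ``the constraint test catches an unsatisfied edge with probability $\approx\eta d/N$'' collapses, because that estimate implicitly assumes the marginal is close to uniform. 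You correctly flag ``the flatness of the vertex marginal'' as the delicate issue, but you provide no mechanism in the protocol that enforces it.

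The paper's protocol includes a fourth ingredient precisely for this purpose: a \emph{uniformity test} that applies the Fourier transform $F_K$ to the color register, and conditioned on measuring $0$ there, applies $F_n^{\dagger}$ to the vertex register and rejects unless the outcome is $0$. This test rejects with constant probability whenever a constant fraction (in the paper, $0.05\eta n$ vertices) of the vertex marginal falls below $\Theta(1/n)$, which is exactly what rules out the concentration attack. The soundness argument is then a six-way case split over the sets $A,A',B,C$ (small $|\alpha_i|^2$, small $|\alpha_i'|^2$, multi-colored vertices, and vertices colored differently in the two proofs), with the uniformity test handling the non-flat cases, the swap test handling discrepancies between the two marginals, and the consistency test handling the rest. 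Note also that the color-register Fourier measurement interacts with the consistency test (via Lemma~\ref{lem42}) to penalize superpositions of colors, another constraint your two measurement-based tests do not impose. Your outline of completeness and of the reduction via Theorem~\ref{thm23} is correct, but the soundness argument cannot be completed for the protocol as you have specified it.
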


There are a few remarks about this theorem. First, our result keeps perfect completeness 
similarly to the Blier-Tapp's result \cite{BT} (and the recent improvement of the gap to $\Omega(\tfrac{1}{n^2})$ \cite{CF11,NN10}). Second, our protocol is applicable to other NP-complete problems for which Theorem \ref{thm23} holds 
(e.g., the 3-coloring problem). 

To prove Theorem \ref{thm1}, in view of Theorem \ref{thm23} it suffices to show the following theorem. 

\begin{theorem}\label{thm0}
There is a QMA protocol with two logarithmic-size quantum proofs such that 
for any constraint graph $G=(V(G),E(G))$ obtained from {3-SAT} instances 
by the mapping of Theorem~\ref{thm23} (where $n=|V(G)|$):
\begin{description} 
\item[(Completeness)] 
If $G$ is satisfiable, then there exist two logarithmic-size quantum proofs $|\Psi\rangle$ and $|\Phi\rangle$ 
such the verifier accepts with probability $1$.
\item[(Soundness)]
If $G$ is $(1-\eta)$-unsatisfiable, the verifier accepts with probability at most $1-\Omega(\tfrac{1}{n})$ 
for any two logarithmic-size quantum proofs $|\Psi\rangle$ and $|\Phi\rangle$. 
\end{description}
\end{theorem}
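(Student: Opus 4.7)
The plan is to design a verifier that, with equal probability, applies one of a constant number of subtests on the two proofs, following the template of~\cite{BT} but exploiting the bounded-alphabet constant-degree structure of Dinur's PCP reduction. Both proofs live in $\mathcal{H}_n \otimes \mathcal{H}_K$ (so $O(\log n)$ qubits each, since $K$ is constant), and the honest proof is $|\Psi\rangle = |\Phi\rangle = \frac{1}{\sqrt{n}} \sum_{v \in V(G)} |v\rangle|\tau(v)\rangle$ for a satisfying coloring $\tau$. The subtests are: (i) a \emph{swap test} on $|\Psi\rangle$ and $|\Phi\rangle$; (ii) a \emph{consistency test} that measures both proofs to obtain $(v_1,c_1)$ and $(v_2,c_2)$ and accepts iff $v_1 \neq v_2$ or $c_1=c_2$; (iii) a \emph{constraint test} that measures both proofs and accepts iff $(v_1,v_2) \notin E(G)$ or $R_{(v_1,v_2)}(c_1,c_2)=1$; and (iv) a \emph{uniformity test} that verifies that the vertex marginal of one proof is close to uniform, e.g.\ by applying an inverse Fourier-type transform on the vertex register and checking the outcome. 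Perfect completeness is immediate from the honest proofs.

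For soundness, decompose $|\Psi\rangle = \sum_{v,c}\alpha_{v,c}|v,c\rangle$ (analogously $\beta_{v,c}$ for $|\Phi\rangle$), and set $p_v^\Psi = \sum_c |\alpha_{v,c}|^2$ together with the conditional $\pi_v^\Psi(c)=|\alpha_{v,c}|^2/p_v^\Psi$. I would split the analysis into four scenarios. If the swap test already rejects with probability $\Omega(1/n)$, we are done; otherwise Theorems~\ref{thm21} and~\ref{thm22} imply $|\langle\Psi|\Phi\rangle|^2 \ge 1-O(1/n)$, so the single-proof measurement distributions are essentially identical. If the uniformity test rejects with probability $\Omega(1/n)$ we are again done; otherwise $p_v^\Psi \approx 1/n$ for every vertex. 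Define the extracted coloring $\tau^\ast(v) = \arg\max_c \pi_v^\Psi(c)$ and the concentration $\gamma_v = \pi_v^\Psi(\tau^\ast(v))$. If many vertices have $\gamma_v$ bounded away from $1$, the consistency test rejects with significant probability: conditioned on $v_1=v_2=v$ (an event of probability $\sum_v(p_v^\Psi)^2 \approx 1/n$), the rejection probability is $1-\sum_c \pi_v^\Psi(c)^2$, summing to $\Omega(1/n)$ overall. If instead $\gamma_v$ is close to $1$ at almost every vertex, then $\tau^\ast$ is an essentially well-defined assignment; by $(1-\eta)$-unsatisfiability at least $\eta|E(G)|$ constraints are violated, and since the constraint test hits a specific edge with probability $\approx 2/n^2$ and $|E(G)| = nd/2$, the constraint test rejects with probability at least $\Omega(\eta d/n) = \Omega(1/n)$.

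The main obstacle I anticipate is calibrating the intermediate regime in which the conditional distributions $\pi_v^\Psi$ are only moderately concentrated: one needs a single threshold on $\gamma_v$ such that both the consistency-rejection bound (via the collision probability $1-\sum_c \pi_v^\Psi(c)^2$) and the constraint-rejection bound (which requires the measured colors at the endpoints of a violated edge to coincide with $\tau^\ast$ with good probability) simultaneously yield $\Omega(1/n)$. The introduction warns that ``a complicated case-study analysis different from the one of~\cite{BT}'' is required; I expect this to take the form of a weighted averaging argument that partitions vertices, and then edges, by the concentration quality $\gamma_v$ at their endpoints, and shows that every partition class contributes either through test (ii) or test (iii). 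A secondary delicate point will be controlling the drift between the $\Psi$- and $\Phi$-statistics introduced by the slack $|\langle\Psi|\Phi\rangle|^2 \ge 1-O(1/n)$ in Theorem~\ref{thm21}, so that the above bounds on $|\Psi\rangle$ transfer to the actual joint measurement on $|\Psi\rangle\otimes|\Phi\rangle$. Once the case analysis is closed, the $O(1)$ loss from choosing uniformly among a constant number of tests yields soundness $1-\Omega(1/n)$, and composing with Theorem~\ref{thm23} (where $n = O(m\,\polylog m)$) gives Theorem~\ref{thm1}.
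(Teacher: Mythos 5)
Your protocol is essentially the paper's (swap test, two-part consistency test, Fourier-based uniformity test, honest proof $\frac{1}{\sqrt{n}}\sum_v|v\rangle|\tau(v)\rangle$), and your four-scenario outline captures the right intuition. But the soundness argument is a plan rather than a proof, and the two obstacles you flag at the end are not incidental --- they are the content of the theorem, and your quantitative claims along the way are too strong to survive them. Concretely: from ``the swap test rejects with probability $<\epsilon/n$'' you get only $|\langle\Psi|\Phi\rangle|^2>1-2\epsilon/n$, hence via Theorem~\ref{thm21} a total-variation bound of $O(1/\sqrt{n})$ between the two measurement distributions --- not ``essentially identical'' on the scale that matters, since the individual vertex probabilities are themselves of order $1/n$. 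Likewise the uniformity test rejecting with probability $<\epsilon/n$ yields only an $O(1/\sqrt{n})$ TV bound on a \emph{post-selected} vertex distribution, which is far from ``$p_v^\Psi\approx 1/n$ for every vertex.'' One can check that these $O(1/\sqrt{n})$ slacks corrupt only $O(\sqrt{n})=o(\eta n)$ vertices, so your route is probably salvageable, but closing it requires exactly the careful bookkeeping you defer. The paper sidesteps this lossy fidelity-to-TV transfer entirely: instead of a sequential ``either this test rejects or the proofs are close'' argument, it statically partitions the set of proof pairs into six cases via thresholded sets $A,A',B,C$ of vertices, and in the cases where the swap test is not invoked it lower-bounds the probability of a rejecting joint outcome using the amplitudes of \emph{both} proofs directly (via Lemma~\ref{lem41}), never needing the two proofs to resemble each other.

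There is also a third obstacle you do not flag, and it is the one that forces the paper's Lemma~\ref{lem42}. Your test (iv) ``applies an inverse Fourier-type transform on the vertex register and checks the outcome,'' but on an entangled state in $\mathcal{H}_n\otimes\mathcal{H}_K$ this is not well-defined as a uniformity check until you specify what is done with the color register; the paper's version first applies $F_K$ to the color register and post-selects on outcome $0$, which projects the vertex register onto $\frac{1}{L}\sum_i\alpha_i(\sum_j\beta_{i,j})|i\rangle$. A cheating prover can therefore hide a vertex $i$ from the uniformity test by giving its colors cancelling phases, $\sum_j\beta_{i,j}\approx 0$, while keeping $|\alpha_i|^2$ arbitrary. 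The paper's set $B=\{i:|\sum_j\beta_{i,j}|^2<\frac{1}{12K}\}$ isolates precisely these vertices, and Lemma~\ref{lem42} shows each such vertex necessarily has two colors of probability $\geq 1/K^4$, so it is caught by part~a) of the consistency test instead (case 2). Your concentration parameter $\gamma_v=\max_c\pi_v^\Psi(c)$ does not see this: a vertex can have $\gamma_v$ close to $1$ (so your collision test gains nothing) while still having $\sum_j\beta_{v,j}\approx 0$ (so the uniformity test is blind to it). Any correct completion of your argument must split on the quantity $|\sum_j\beta_{v,j}|$ or an equivalent, not only on $\gamma_v$.
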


In the next section, we prove Theorem \ref{thm0}. 
The verifier's protocol is described in Section~\ref{sec3}. 
Section \ref{sec41} discusses its completeness, and 
Section \ref{sec42} discusses its soundness, which is our main technical part.

\section{Proof of Theorem \ref{thm0}}\label{sec:thm0}
Recall that $n$ stands for the number of vertices of a given constraint graph $G=(V(G),E(G))$, 
and $K$ is the alphabet size. We denote the quantum Fourier transform on $\mathcal{H}_{k}$ by $F_k$. 

\subsection{Protocol}\label{sec3}

As mentioned before, our protocol is obtained by incorporating Dinur's PCP reduction  
into the Blier-Tapp protocol. Similarly to the Blier-Tapp protocol, the protocol 
of the verifier consists of three tests: the equality test, the consistency test, 
and the uniformity test. The verifier expects to receive, as the two proofs, 
the same uniform superpositions of all vertices and their coloring $(i,\tau(i))$, 
\[
\frac{1}{\sqrt{n}}\sum_{i}|i\rangle|\tau(i)\rangle,
\] 
which we call a {\em proper} state (whose name follows similar concepts in \cite{ABDFS,Bei}). 
Suppose that the two proofs are proper and the same. 
Then the consistency test will check if the coloring 
is really valid: by measuring the two proofs in the computational basis,
we obtain two vertices and their colors $(i,j)$ and $(i',j')$, and then we can
check whether edge $(i,i')$ satisfies the constraint (or whether $j=j'$ if $i=i'$). 
For any no-instance, we can find the inconsistency with a better probability than previous works \cite{BT,CF11,NN10}
because our protocol uses Dinur's PCP reduction, which guarantees the existence of many edges
that do not satisfy the constraint. The equality test can be used for checking 
whether the two proofs are really the same via the swap test. 
Finally, whether the proofs are proper or not can be checked by the combination of the 
consistency test and the uniformity test. 

The protocol of the verifier is now formally given as follows. 

\vspace*{10pt}

\noindent
{{{
\textbf{Verifier's protocol for instance $G$}

Suppose that $|\Psi\rangle\in \mathcal{H}_{n} \otimes \mathcal{H}_{K}$ and 
$|\Phi\rangle\in \mathcal{H}_{n} \otimes \mathcal{H}_{K}$ 
are given to the verifier as the two quantum proofs. 
The verifier then performs, with equal probability, one of the following three tests
on $\mathcal{H}_{n} \otimes \mathcal{H}_{K}$. If he does not reject, 
then he accepts. We call the first part of $\mathcal{H}_{n} \otimes \mathcal{H}_{K}$ 
the vertex register and the second part of $\mathcal{H}_{n} \otimes \mathcal{H}_{K}$ the color register.

\begin{description}
\item[\textbf{(Equality test).}] 
Perform the {swap test}~\cite{BCWW} on $|\Psi\rangle$ and $|\Phi\rangle$, 
and reject if the test outputs NO.
\item[\textbf{(Consistency test).}] 
Measure the two states $|\Psi\rangle$ and $|\Phi\rangle$ in the computational basis, 
yielding the outcomes $(i,j)$ and $(i',j')$, respectively. Then, do as follows:\\
a) If $i = i'$, verify that $j = j'$. Reject if $j \neq j'$.\\
b) If $i \neq i'$ and $\left( i,i' \right) \in E \left( G \right)$, verify that $R_{ \left( i,i' \right) } \left( j,j' \right) = 1$. Reject if $R_{ \left( i,i' \right) } \left( j,j' \right) = 0$. 
\item[\textbf{(Uniformity test).}] 
For both $|\Psi\rangle$ and $|\Phi\rangle$, do as follows: The Fourier transform $F_K$ is applied on the color register, which is then measured in the computational basis. If the outcome is $0$, 
the inverse Fourier transform $F^{\dagger}_{n}$ is applied on the vertex register, 
which is then measured in the computational basis. Reject if the second outcome is not~$0$.
\end{description}
}}}

\subsection{Completeness}\label{sec41}
The following theorem shows that our protocol has perfect completeness.

\begin{proposition}\label{thm2}
If $G$ is satisfiable, then there exist two quantum proofs $|\Psi\rangle$ and $|\Phi\rangle$ 
such that the verifier accepts with probability 1.
\end{proposition}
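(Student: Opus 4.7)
The plan is to exhibit explicit proofs and verify each of the three tests accepts with certainty. Since $G$ is satisfiable, fix a coloring $\tau : V(G) \to \Sigma$ that satisfies every edge constraint, and take both proofs to be the same \emph{proper} state
\[
|\Psi\rangle = |\Phi\rangle = \frac{1}{\sqrt{n}} \sum_{i \in V(G)} |i\rangle |\tau(i)\rangle \in \mathcal{H}_{n}\otimes\mathcal{H}_{K}.
\]
It then suffices to show that each of the equality, consistency, and uniformity tests accepts with probability $1$ on this pair.

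For the equality test, since $|\Psi\rangle = |\Phi\rangle$ we have $|\langle\Psi|\Phi\rangle|^{2}=1$, so by Theorem~\ref{thm22} the swap test outputs NO with probability $\frac{1}{2}-\frac{1}{2}=0$. For the consistency test, measuring $|\Psi\rangle$ and $|\Phi\rangle$ in the computational basis yields pairs of the form $(i,\tau(i))$ and $(i',\tau(i'))$. In case (a), $i=i'$ forces $\tau(i)=\tau(i')$, so the check $j=j'$ passes; in case (b), $(i,i')\in E(G)$ combined with the fact that $\tau$ satisfies the constraint on every edge gives $R_{(i,i')}(\tau(i),\tau(i'))=1$; all other outcomes lead to no rejection.

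For the uniformity test, I would carry out the Fourier calculation explicitly. Applying $F_{K}$ to the color register of $|\Psi\rangle$ yields
\[
\frac{1}{\sqrt{nK}} \sum_{i \in V(G)} \sum_{k=0}^{K-1} \omega_{K}^{k\,\tau(i)} |i\rangle |k\rangle,\qquad \omega_{K}=e^{2\pi\mathrm{i}/K}.
\]
Measuring the color register yields outcome $0$ with probability $1/K$, and conditioned on that outcome the vertex register collapses (after renormalization) to the uniform superposition $\frac{1}{\sqrt{n}}\sum_{i}|i\rangle = F_{n}|0\rangle$. Applying $F^{\dagger}_{n}$ therefore produces $|0\rangle$, and the subsequent measurement returns $0$ with probability $1$; if the color outcome is not $0$, the test does not reject. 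The same analysis applies to $|\Phi\rangle$.

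The expected obstacle is essentially the Fourier bookkeeping in the uniformity test, which is routine provided one is careful with the normalization of the post-measurement state. Combining the three cases shows that, under no branch of the verifier's randomness does any test reject, so acceptance occurs with probability~$1$.
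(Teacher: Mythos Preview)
Your proof is correct and follows essentially the same approach as the paper: you pick the proper state $|\Psi\rangle=|\Phi\rangle=\frac{1}{\sqrt{n}}\sum_i|i\rangle|\tau(i)\rangle$ for a satisfying coloring $\tau$ and verify each of the three tests accepts with probability~1, carrying out the same Fourier calculation for the uniformity test. If anything, your write-up is slightly more explicit (e.g., noting the probability $1/K$ of the color outcome being $0$ and that nonzero color outcomes cause no rejection), but the argument is the same.
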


\begin{proof}
Take $|\Psi\rangle=|\Phi\rangle=\frac{1}{\sqrt{n}} \sum_{i} |i\rangle |\tau(i)\rangle$ 
where $\tau$ is a coloring that satisfies all constraints.
Since $|\Psi\rangle = |\Phi\rangle$, 
the verifier accepts with probability 1 in the equality test.
Because $\tau$ satisfies the constraint $R_e$ for any edge $e\in E(G)$, 
the verifier accepts with probability 1 in the consistency test.
Finally, we analyze the uniformity test. The Fourier transform $F_K$ is performed on the color register, and
\begin{align*}
\left( I \otimes F_K \right) 
\frac{1}{\sqrt{n}} \sum_{i} |i\rangle |\tau(i)\rangle  
= \frac{1}{\sqrt{n}} \sum_{i} |i\rangle \frac{1}{\sqrt{K}} 
\sum_{k} \exp \left( \frac{2 \pi \sqrt{-1} \tau(i) k}{K} \right) |k\rangle.
\end{align*}
So, if the outcome of the measurement of the color register is $0$, 
the state of the vertex register is $\frac{1}{\sqrt{n}} \sum_{i} |i\rangle = F_n|0\rangle$. 
Therefore, the verifier accepts with probability 1 in the uniformity test.
\end{proof}

\subsection{Soundness}\label{sec42}

What remains to show is the soundness of our protocol.

\begin{proposition}\label{thm3}
If $G$ is $(1-\eta)$-unsatisfiable, the verifier rejects with probability 
at least $\Omega \left( \frac{1}{n} \right)$ for any two quantum proofs $|\Psi\rangle$ and $|\Phi\rangle$.
\end{proposition}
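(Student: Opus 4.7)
The plan is to conduct a case analysis over the structure of the two proofs and show that, for \emph{every} $|\Psi\rangle, |\Phi\rangle \in \mathcal{H}_n\otimes\mathcal{H}_K$, at least one of the three sub-tests rejects with probability $\Omega(1/n)$; since each sub-test is executed with probability $1/3$, this yields overall rejection probability $\Omega(1/n)$. Throughout I would fix the parameterization $|\Psi\rangle = \sum_i \sqrt{p_i}\,|i\rangle|\psi_i\rangle$ and $|\Phi\rangle = \sum_i \sqrt{q_i}\,|i\rangle|\phi_i\rangle$, where $|\psi_i\rangle, |\phi_i\rangle$ are unit vectors in $\mathcal{H}_K$ and $\{p_i\}, \{q_i\}$ are probability distributions on $V(G)$.

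The first reduction is: if $1-|\langle\Psi|\Phi\rangle|^2 \geq c_1/n$ for a suitable constant $c_1$, then Theorem~\ref{thm22} gives equality-test rejection at least $c_1/(2n)$, so I may henceforth assume $D(|\Psi\rangle, |\Phi\rangle) = O(1/\sqrt{n})$; Theorem~\ref{thm21} then implies that every measurement on the pair produces distributions that are $O(1/\sqrt{n})$-close in total variation. Next comes the uniformity test. Propagating the state through $I\otimes F_K$, the color measurement onto $|0\rangle$, $F_n^\dagger$, and the final vertex measurement shows that the rejection probability on $|\Psi\rangle$ equals
\[
\sum_i p_i |c_i|^2 \;-\; \frac{1}{n}\Bigl|\sum_i \sqrt{p_i}\,c_i\Bigr|^2, \qquad c_i:=\langle 0|F_K|\psi_i\rangle,
\]
a Cauchy--Schwarz slack that vanishes precisely when the vector $(\sqrt{p_i}\,c_i)_i$ is constant, which (combined with the normalization of $|\Psi\rangle$) pins the state down to the ``proper'' shape $\frac{1}{\sqrt{n}}\sum_i |i\rangle|\tau(i)\rangle$ for some coloring $\tau\colon V(G)\to\Sigma$.

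Assuming the uniformity test rejects with probability $o(1/n)$ on each proof, I would extract quantitative structure: most $p_i, q_i$ are $\Theta(1/n)$, and most $|\psi_i\rangle, |\phi_i\rangle$ are close in quantum distance to computational basis states $|\tau(i)\rangle$ for a common coloring $\tau$ (common because Case~1 is excluded and Theorem~\ref{thm21} ties the two measurement statistics together). The consistency test then does the main work. Part (a) of the test detects any residual non-classicality of the color conditionals through the ``$i=i'$ but $j\ne j'$'' event, while part (b) exploits the PCP gap: since $G$ is $d$-regular and $(1-\eta)$-unsatisfiable, $\tau$ violates at least $\eta nd/2$ edges; each such edge is sampled as $(i,i')$ with probability $\Theta(1/n^2)$, giving total detection probability at least $\eta d/n = \Omega(1/n)$. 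Theorem~\ref{thm21} controls the perturbation arising from the proofs being only $O(1/\sqrt{n})$-close to proper states, and with the thresholds $c_1$ and $o(1/n)$ chosen small enough the $\Omega(1/n)$ main term survives.

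The main obstacle is the middle step: the uniformity-test rejection probability is a single non-negative scalar, yet a small value must simultaneously force (i) the vertex marginal $\{p_i\}$ to be near-uniform and (ii) each color-conditional $|\psi_i\rangle$ to be close to a computational basis state. These are genuinely different kinds of deviation, both contributing to the Cauchy--Schwarz slack in the displayed formula, and disentangling them requires a careful quantitative accounting. This, together with the propagation of the resulting approximation errors through the consistency-test calculation via Theorem~\ref{thm21}, is precisely the ``complicated case-study analysis'' flagged in the introduction.
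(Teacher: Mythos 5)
Your overall architecture --- a case analysis in which each of the three tests is responsible for one failure mode, with the PCP gap supplying $\Omega(n)$ violated edges each detected with probability $\Omega(1/n^2)$ --- matches the paper's. But the middle step, which you correctly flag as the crux, is not merely delicate: as stated it is false. The uniformity-test rejection probability is $\sum_i p_i|c_i|^2-\frac{1}{n}\bigl|\sum_i\sqrt{p_i}\,c_i\bigr|^2$ with $c_i=\frac{1}{\sqrt K}\sum_j\beta_{i,j}$, and the Cauchy--Schwarz slack vanishes whenever $\sqrt{p_i}\,c_i$ is constant in $i$ --- including the constant $0$. If every color conditional is, say, $\frac{1}{\sqrt2}(|0\rangle-|1\rangle)$, then $c_i=0$ for all $i$, the test never rejects, and nothing is learned about $\{p_i\}$ or about the conditionals; even for a nonzero constant, $|c_i|$ does not distinguish a basis state from a superposition such as $\frac{1}{\sqrt2}(|0\rangle+|1\rangle)$, which has a \emph{larger} $|c_i|$ than any basis state. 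So small uniformity rejection does not pin the proofs to proper states. The paper's fix is structural: it introduces the set $B=\{i: |\sum_j\beta_{i,j}|^2<\frac{1}{12K}\}$ and proves (Lemma~\ref{lem42}) that each $i\in B$ carries two colors of amplitude at least $K^{-4}$, so that when $B$ has large mass it is part~a) of the consistency test --- not the uniformity test --- that rejects with probability $\Omega(1/n)$, by colliding against the other proof's dominant color. Your plan has no mechanism covering this case.

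Second, the perturbation bookkeeping does not close. Setting the equality-test threshold at rejection probability $c_1/n$ only yields $D(|\Psi\rangle,|\Phi\rangle)=O(1/\sqrt n)$, hence $O(1/\sqrt n)$ total-variation closeness of induced distributions via Theorem~\ref{thm21}; similarly, whatever closeness-to-proper one could extract from an $o(1/n)$ uniformity bound is far weaker than $o(1/n)$ in total variation. An $O(1/\sqrt n)$ perturbation swamps the $\Theta(1/n)$ signal from part b) of the consistency test, so ``run the ideal analysis and control the error via Theorem~\ref{thm21}'' cannot work at these thresholds, and tightening a threshold destroys the $\Omega(1/n)$ rejection guarantee of the test enforcing it. The paper avoids perturbation altogether: it proves \emph{pointwise} lower bounds --- $|\alpha_i|^2\ge\frac{1}{50K^3n}$ and $|\alpha'_i|^2\ge\frac{1}{100K^3n}$ outside exceptional sets $A,A'$ of size $O(\eta n)$ (enforced by the uniformity and equality tests, each at $\Omega(1)$ rejection), $|\beta_{i,\tau(i)}|^2\ge 1/K$ by normalization, and agreement of the two $ArgMax$ colorings outside a set $C$ of size $O(\eta n)$ (enforced by part~a) at $\Omega(1/n)$) --- and then lower-bounds the probability of hitting each of the $\Omega(\eta d n)$ surviving violated edges by $\frac{1}{5000K^8n^2}$ directly. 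You would need to restructure your argument along these lines rather than via distances to an ideal state.
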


In order to prove Proposition \ref{thm3}, we first describe general forms for the two quantum proofs. 
Because the two proofs are not entangled, they can be written separately as
\begin{equation}\label{proof_state}
|\Psi\rangle = \sum_{i=0}^{n-1} \alpha _i |i\rangle \sum_{j=0}^{K-1} \beta _{i,j} |j\rangle, \ \ \ \ |\Phi\rangle = \sum_{i=0}^{n-1} \alpha _{i}^{\prime} |i\rangle \sum_{j=0}^{K-1} \beta _{i,j}^{\prime} |j\rangle,
\end{equation}
where $\sum_{i} \left| \alpha _{i} \right| ^2 =1$ and $\sum_{j} \left| \beta _{i,j} \right| ^2 =1$ 
for any $i$, and likewise for $|\Phi\rangle$. 
Next we give several lemmas. The first lemma guarantees that 
for every vertex $i$ there is at least one relatively large $|\beta_{i,j}|$ 
(which means that $j$ will be measured in the color register with a relatively high probability).

\begin{lemma}\label{lem41}
For every $i$, there exists at least one $j$ such that $\left| \beta_{i,j} \right| ^{2} \ge \frac{1}{K}$. 
(Likewise for $\beta_{i,j}^{\prime}$.)
\end{lemma}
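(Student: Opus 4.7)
The claim is an immediate pigeonhole/averaging statement, so the plan is very short. My approach is to fix an arbitrary index $i$ and exploit the normalization constraint $\sum_{j=0}^{K-1} |\beta_{i,j}|^2 = 1$ stated right after equation (\ref{proof_state}).

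First I would observe that if every term $|\beta_{i,j}|^2$ were strictly less than $1/K$, then summing the $K$ terms would yield $\sum_{j=0}^{K-1} |\beta_{i,j}|^2 < K \cdot \tfrac{1}{K} = 1$, contradicting normalization. Hence at least one index $j$ must satisfy $|\beta_{i,j}|^2 \geq 1/K$. The identical argument applied to $\{\beta'_{i,j}\}_{j=0}^{K-1}$ handles the parenthetical remark about the primed coefficients.

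There is no real obstacle here: the statement is essentially the pigeonhole principle for the uniform probability distribution, and the only subtlety is to note that the normalization $\sum_j |\beta_{i,j}|^2 = 1$ holds for \emph{each} fixed $i$ separately (which is part of the decomposition in (\ref{proof_state})), so the bound is obtained per vertex $i$ rather than only on average over $i$. I would keep the proof to two or three sentences.
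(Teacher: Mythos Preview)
Your proposal is correct and matches the paper's proof essentially verbatim: the paper also argues by contradiction, noting that if $|\beta_{i,j}|^2 < \tfrac{1}{K}$ for every $j$ then $\sum_j |\beta_{i,j}|^2 < 1$, contradicting normalization.
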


\begin{proof}
By contradiction. Suppose that $\left| \beta_{i,j} \right| ^{2} < \frac{1}{K} $ for every $j$.
Then,
\[
\sum_{j} \left| \beta_{i,j} \right| ^{2} < \frac{1}{K} \times K = 1.
\]
This contradicts the condition $\sum_{j} \left| \beta _{i,j} \right| ^2 =1$.
\end{proof}

By definition we have $\sum_{j} \left| \beta _{i,j} \right| ^2 =1$.
The second lemma shows that if $\left| \sum_{j} \beta_{i,j} \right| ^{2}$ is small, 
then at least two different $|\beta_{i,j}|$ must be relatively large. 

\begin{lemma}\label{lem42}
For every $i$, if $\left| \sum_{j} \beta_{i,j} \right| ^{2} < \frac{1}{12K}$, 
then there are at least two $j$'s such that $\left| \beta_{i,j} \right| ^{2} \ge \frac{1}{K^4}$. 
\end{lemma}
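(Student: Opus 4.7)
The plan is to argue by contrapositive: assume that at most one index $j$ satisfies $|\beta_{i,j}|^2 \geq 1/K^4$, and show that then $\left|\sum_j \beta_{i,j}\right|^2 \geq 1/(12K)$.

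First I would dispose of the degenerate possibility that \emph{no} index $j$ satisfies $|\beta_{i,j}|^2 \geq 1/K^4$. In that case $\sum_j |\beta_{i,j}|^2 < K \cdot \frac{1}{K^4} = \frac{1}{K^3} < 1$, which contradicts the normalization $\sum_j |\beta_{i,j}|^2 = 1$ (recall $K > 1$). So there must be exactly one ``large'' coefficient, say at index $j_0$, with all other $|\beta_{i,j}|^2 < 1/K^4$.

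Next, using normalization, the small coefficients contribute at most $(K-1)/K^4 \leq 1/K^3$, so $|\beta_{i,j_0}|^2 \geq 1 - 1/K^3$, i.e.\ $|\beta_{i,j_0}| \geq \sqrt{1 - 1/K^3}$. For the off--$j_0$ terms, each satisfies $|\beta_{i,j}| < 1/K^2$, hence $\sum_{j \neq j_0} |\beta_{i,j}| < (K-1)/K^2 \leq 1/K$. Then by the reverse triangle inequality,
\[
\left|\sum_j \beta_{i,j}\right| \;\geq\; |\beta_{i,j_0}| - \sum_{j \neq j_0} |\beta_{i,j}| \;>\; \sqrt{1 - 1/K^3} - 1/K.
\]

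Finally I would verify that the right-hand side squared is at least $1/(12K)$ for every integer $K \geq 2$. The function $f(K) = \sqrt{1-1/K^3} - 1/K$ is increasing in $K$, so it suffices to check $K=2$: there $f(2) = \sqrt{7/8} - 1/2 > 0.43$, so $f(2)^2 > 0.18 > 1/24 = 1/(12 \cdot 2)$; and since $f(K)^2$ grows toward $1$ while $1/(12K)$ decreases, the inequality holds for all larger $K$ as well. This contradicts the hypothesis $\left|\sum_j \beta_{i,j}\right|^2 < 1/(12K)$, completing the contrapositive. The only real obstacle is the constant-chasing in this last step, but the slack is comfortable.
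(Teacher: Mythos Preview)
Your proof is correct and follows essentially the same route as the paper: isolate the single large coefficient, bound the tail by $(K-1)/K^2\le 1/K$, apply the reverse triangle inequality, and finish with a numerical check at $K=2$. The only cosmetic differences are that the paper invokes Lemma~\ref{lem41} to get $|\beta_{i,j_0}|\ge 1/\sqrt{K}$ whereas you squeeze out the sharper $|\beta_{i,j_0}|\ge\sqrt{1-1/K^3}$ directly from the contrapositive hypothesis, and the final constant check is organized slightly differently; neither changes the argument.
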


\begin{proof}
By Lemma \ref{lem41}, we know that there exists
an index $j_0$ such that $\left| \beta_{i,j_0} \right| ^{2} \ge \frac{1}{K}\ge \frac{1}{K^4}$.
We work by contradiction and suppose that $\left| \beta_{i,j} \right| ^{2} \le \frac{1}{K^4}$ for all the indexes $j\neq j_0$. 
Note that this implies that $$\left|\sum_{j \neq j_0} \beta_{i,j} \right|\le \sum_{j \neq j_0}\left| \beta_{i,j} \right|\le (K-1)\times\frac{1}{K^2}\le \frac{1}{K}.$$

Using the fact that the inequality $|a-b|\ge \left||a|-|b|\right|$ holds  for any complex numbers $a$ and $b$, 
we obtain:
\[
\left| \sum_{j} \beta_{i,j} \right|^{2} = \left| \beta_{i,j_0} + \sum_{j \neq j_0} \beta_{i,j} \right|^{2} 
\ge \left(\left| \beta_{i,j_0}\right| - \left|\sum_{j \neq j_0} \beta_{i,j} \right|\right)^2.
\]
Since
$
\left| \beta_{i,j_0}\right| - \left|\sum_{j \neq j_0} \beta_{i,j} \right| \ge  \frac{1}{\sqrt{K}}-\frac{1}{K} \ge 0
$ and $K\ge 2$, we conclude that 
\[
\left| \sum_{j} \beta_{i,j} \right|^{2} \ge \frac{\left( 1 - \frac{1}{\sqrt{K}} \right)^{2}}{K} \geq \frac{1}{12K},
\]
which contradicts the assumption of the lemma.
\end{proof}

Now we are ready to prove Proposition \ref{thm3}.

\begin{proofof}{Proposition \ref{thm3}}
We first introduce the following subsets of $\{0,1,\ldots,n-1\}$ (the set of possible $i$'s). This will be the key 
of our analysis.
\[
A = \left\{ i \Bigm| \left| \alpha _{i} \right| ^2 < \frac{1}{50K^{3}n} \right\},\ \ B= \left\{ i \biggm| \left| \sum_{j} \beta_{i,j} \right| ^{2} < \frac{1}{12K} \right\},\ \ A' = \left\{ i \Bigm| \left| \alpha _{i}^{\prime} \right| ^2 < \frac{1}{100K^{3}n} \right\},
\]
and 
\[
C = \left\{ i \in \overline{A} \cap \overline{A'} \bigm| ArgMax_{j} \left| \beta_{i,j} \right| ^{2} \neq ArgMax_{j} \left| \beta_{i,j}^{\prime} \right| ^{2} \right\},
\]
where, for any $i$, $ArgMax_{j} \left| \beta_{i,j} \right|^{2}$ represents the $j$ 
that maximizes $\left| \beta_{i,j} \right|^{2}$ (when multiple such $j$'s exist, 
the smallest one is taken).  
Let us describe intuitively the roles of the sets $A$, $A'$, $B$ and $C$. 
The set $A$ (resp.~$A'$) will be used to analyze what happens when the distribution
of the $|\alpha_i|$'s (resp.~the distribution of the $|\alpha'_i|$'s)
is far from uniform.
The set $B$ will be used to analyze what happens when $|\Psi\rangle$ contains many vertices 
with more than one color (via Lemma \ref{lem42}).
The set $C$ will be used to analyze what happens when there are many vertices whose color 
differs in $|\Psi\rangle$ and in $|\Phi\rangle$. 

Next we consider the four disjoint sets $A$, $\overline{A} \cap A'$, $\overline{A} \cap \overline{A'} \cap B$ 
and $\overline{A} \cap \overline{A'} \cap \overline{B}$, which partition the set $\{0,1,\ldots,n-1\}$.
We have $\sum_{i \in \overline{A}} \left| \alpha_{i} \right|^{2} \ge 0.9$
since 
\[
\sum_{i \in \overline{A}} |\alpha_{i}|^{2} = 1-\sum_{i \in A} |\alpha_{i}|^{2} 
\ge 1-\frac{1}{50K^{3}n} \times n = 1-\frac{1}{50K^{3}} \ge 0.9.
\] 
Thus at least one of the three sums $\sum_{i \in \overline{A} \cap A' } |\alpha_{i}|^{2}$, 
$\sum_{i \in \overline{A} \cap \overline{A'} \cap B } |\alpha_{i}|^{2}$ and $\sum_{i \in \overline{A} \cap \overline{A'} \cap \overline{B} } |\alpha_{i}|^{2}$ is at least $0.3$. 
Now we analyze the following six cases (recall that $\eta$ is an absolute constant, as defined in Theorem \ref{thm23}). 
\begin{itemize}
\item[1.] $\sum_{i \in \overline{A} \cap A' } \left| \alpha_{i} \right|^{2} \ge 0.3$. : \textbf{case 1} 
\item[2.] $\sum_{i \in \overline{A} \cap \overline{A'} \cap B } \left| \alpha_{i} \right|^{2} \ge 0.3$. : \textbf{case 2} 
\item[3.] $\sum_{i \in \overline{A} \cap \overline{A'} \cap \overline{B} } \left| \alpha_{i} \right|^{2} \ge 0.3$.
 \begin{itemize}
 \item[3.1.] $\left| A \right| \ge 0.05 \eta n$. : \textbf{case 3} 
 \item[3.2.] $\left| A \right| < 0.05 \eta n$.
  \begin{itemize}
  \item[3.2.1.] $\left| A' \right| \ge 0.15 \eta n$. : \textbf{case 4} 
  \item[3.2.2.] $\left| A' \right| < 0.15 \eta n$.
   \begin{itemize}
   \item[3.2.2.1.] $\left| C \right| \ge 0.01 \eta n$. : \textbf{case 5} 
   \item[3.2.2.2.] $\left| C \right| < 0.01 \eta n$. : \textbf{case 6} 
   \end{itemize}
  \end{itemize}
 \end{itemize}
\end{itemize}
These six cases cover the six possibilities that can happen for a no-instance.
Intuitively, case 1 is when the two proofs are much different; this is rejected 
with high probability by the equality test. Case 2 is when the two proofs 
are similar but there are many vertices $i$ for which at least two different 
colors have large amplitude; this can be rejected with high probability 
by part~a) of the consistency test. Case 3 is when the two proofs are
similar and most vertices have a unique color 
but the distribution of the weights $|\alpha_i|^2$ in $|\Psi\rangle$ 
is far from uniform; this is rejected with high probability 
by the uniformity test. 
Case 4 is when the distribution of the weights $|\alpha_i|^2$ in $|\Psi\rangle$ 
is close to uniform 
but the distribution of the weights $|\alpha'_i|^2$ in $|\Phi\rangle$ 
is far from uniform; 
this is rejected with high probability by the equality test. 
Case 5 is when there are many vertices such that their color in
$|\Psi\rangle$ is different from their color in $|\Phi\rangle$; this is rejected with high probability
by part~a) of the consistency test. 
Finally, case 6 is when the two proofs are close to proper states; 
this is rejected with high probability by part b) of the consistency test  
due to the soundness of the PCP reduction.

Now we proceed to the detailed analysis of the six cases. It suffices to show 
that the rejecting probability is $\Omega(\tfrac{1}{n})$ in every case. \\
\\
\textbf{Case 1.} 
In this case, we show that the rejecting probability at the equality test is $\Omega(1)$.
By Theorem~\ref{thm22}, it suffices to show that $D(|\Psi\rangle,|\Phi\rangle)=\Omega(1)$. 
 
Note that by Eq.(\ref{proof_state}), $P = \{|\alpha _{i}|^{2}\}$ and $Q = \{|\alpha _{i}^{\prime}|^{2}\}$ are the probability distributions obtained by measuring the vertex register of $|\Psi\rangle$ and $|\Phi\rangle$ 
in the computational basis, respectively. 
By Theorem~\ref{thm21}, 
\begin{equation}\label{eq:case1-1}
D(|\Psi\rangle,|\Phi\rangle)\geq D(P,Q)=
\frac{1}{2}\sum_{i} \left| |\alpha _{i}|^{2} - |\alpha _{i}^{\prime}|^{2} \right|.
\end{equation}
For any $i\in \overline{A}\cap A'$, 
\begin{align*}
|\alpha_i|^2-|\alpha_i^{\prime}|^2
&\geq \frac{1}{50K^3n}-\frac{1}{100K^3n}\geq 0.
\end{align*}
Thus the right-hand side of Eq.(\ref{eq:case1-1}) is at least
\begin{eqnarray*}
\frac{1}{2} 
\sum_{i \in \overline{A} \cap A'} 
\left( 
|\alpha _{i}|^{2} - |\alpha _{i}^{\prime}|^{2}
\right)
&\ge& 
\frac{1}{2} 
\sum_{i \in \overline{A} \cap A'} 
\left( 
|\alpha _{i}|^{2} - \frac{1}{100K^{3}n}
\right)\\
&\ge&
\frac{1}{2} \left( 
\sum_{i \in \overline{A} \cap A'} 
|\alpha _{i}|^{2} - \frac{1}{100K^{3}}
\right).
\end{eqnarray*} 
Finally, we can see that the right-hand side of the last inequality is at least
\[
\frac{1}{2} \left( 0.3 - \frac{1}{100K^{3}} \right)= \Omega(1)
\]
by the condition of case 1. This completes the analysis of case 1.
\\ \\
\textbf{Case 2.} 
In this case, we show that the rejecting probability $\Omega(\tfrac{1}{n})$ is guaranteed 
by part a) of the consistency test. 

Fix $i \in \overline{A} \cap \overline{A'} \cap B$.
Then, by Lemma \ref{lem41}, there exists an index $j'$ such that 
$|\beta _{i,j'}^{\prime}|^{2} \ge \frac{1}{K}$.
Since $i$ is in $B$, by Lemma \ref{lem42} there exists an index $j$ such that $j\neq j'$ 
and $|\beta _{i,j}|^{2} \ge \frac{1}{K^4}$. 
Since $|\alpha_{i}^{\prime}|^{2} \ge \frac{1}{100K^{3}n}$, 
the probability of measuring $(i,j),(i,j')$ from the two proofs is
\[
|\alpha_{i}|^{2} |\beta _{i,j}|^{2} |\alpha_{i}^{\prime}|^{2} |\beta _{i,j'}^{\prime}|^{2} 
\ge \left| \alpha_{i} \right| ^{2} \frac{1}{K^4} \frac{1}{100K^{3}n} \frac{1}{K} 
= \frac{1}{100K^{8}n} \left| \alpha_{i} \right|^{2}.
\]
Therefore, the rejecting probability at part a) of the consistency test 
is at least  
\[
\sum_{i \in \overline{A} \cap \overline{A'} \cap B } \frac{1}{100K^{8}n} |\alpha_{i}|^{2} 
\ge \frac{1}{100K^{8}n} \times 0.3 = \Omega\left(\frac{1}{n}\right),
\]
where the inequality is obtained by the condition of case 2.
\\ \\
\textbf{Case 3.} In this case, we show that the rejecting probability $\Omega(1)$ is guaranteed 
by the uniformity test. 
For this purpose, we analyze the probability $P_{c}$ that $0$ is obtained from the color register 
of $|\Psi\rangle$ and the probability $P_{v}$ that $0$ is not obtained from the vertex register 
after measuring the color register. In what follows, let $\beta_{i} = \sum_{j=0}^{K-1} \beta_{i,j}$.

First, we consider $P_{c}$. The state after performing the Fourier transform $F_{K}$ 
on the color register of $|\Psi\rangle$ is
\begin{align*}
\left( I \otimes F_K \right) \sum_{i} \alpha _i |i\rangle \sum_{j} \beta _{i,j} |j\rangle = \frac{1}{\sqrt{K}} \sum_{i} \alpha _i |i\rangle \sum_{j} \beta _{i,j} \sum_{k} \exp \left( \frac{2 \pi \sqrt{-1} j k}{K} \right) \left| \left. k \right\rangle \right. .
\end{align*}
Therefore, we have
\begin{align*}
P_{c} 
&= \frac{1}{K} \sum_{i} \left| \alpha_{i} \right|^{2} \left| \beta_{i} \right|^{2} \\
&\ge \frac{1}{K} \sum_{i \in \overline{A} \cap \overline{B}} \left| \alpha_{i} \right|^{2} \frac{1}{12K} \\
&\ge \frac{1}{12K^{2}} \sum_{i \in \overline{A} \cap \overline{A'} \cap \overline{B}} \left| \alpha_{i} \right|^{2} \\
&\ge \frac{1}{12K^{2}} \times 0.3 \\
&= \Omega \left( 1 \right),
\end{align*}
where the first inequality comes from the definition of the set $B$ and 
the third inequality comes from the condition of case 3.

Next, we consider $P_{v}$. Let $L = \sqrt{\sum_{k} \left| \alpha_{k} \right| ^{2} \left| \beta_{k} \right| ^{2}}$ 
($\ge \sqrt{\frac{1}{40K}}$ by the above analysis of $P_{c}$).
The state after measuring $0$ from the color register is (omitting the color register):
\begin{align*}
\frac{1}{\sqrt{K}} \frac{\sqrt{K}}{\sqrt{ \sum_{k} \left| \alpha_{k} \right|^{2} \left| \beta_{k} \right|^{2}}} \sum_{i} \alpha _i |i\rangle \sum_{j} \beta _{i,j}
= \frac{1}{L} \sum_{i} \alpha_{i} \beta_{i} |i\rangle .
\end{align*}
Let $|X\rangle = \frac{1}{L} \sum_{i} \alpha_{i} \beta_{i} |i\rangle $.
After performing the inverse Fourier transform $F_{n}^{\dagger}$ on $\left| X \right\rangle$, the probability of measuring $0$ from it in the computational basis is the square of the inner product between $F_{n}^{\dagger} \left|  X \right\rangle$ and $\left| 0 \right\rangle$, that is, $\left| \left\langle X | F_{n} | 0 \right\rangle \right|^{2}$. 
Let $P$ and $Q$ be the probability distributions when measuring $|X\rangle$ and 
$F_{n} \left| 0 \right\rangle $ respectively in the computational basis. By Theorem \ref{thm21} we obtain
\begin{equation}\label{eq:case3-1}
\sqrt{1- \left| \left\langle X | F_{n} | 0 \right\rangle \right| ^{2}}
\ge D(P,Q)=
\frac{1}{2} \sum_{i} \left| \frac{1}{L^2} |\alpha _{i}|^{2} |\beta_{i}|^{2} - \frac{1}{n} \right|.
\end{equation}
Note that for any $i\in A$, we have $\left| \beta_{i} \right| ^{2} \le K^2$ and
$|\alpha_i| ^{2}\leq 1/50K^3n$, which gives 
\[
\frac{1}{n}-\frac{1}{L^2} |\alpha _{i}|^{2} |\beta_{i}|^{2}\ge \frac{1}{n} - 40K \frac{1}{50K^{3}n} K^2
= \frac{1}{5n}
\]
since $1/L^2\leq 40K$.
Then the right-hand side of Eq.(\ref{eq:case3-1}) is at least 
$$\frac{1}{2}\sum_{i\in A}\frac{1}{5n}.$$
By the condition of case 3, 
this value is at least
\[
\frac{1}{2} \times \frac{1}{5n} \times 0.05 \eta n = \frac{\eta}{200}.
\]
Thus, we have
\[P_{v} = 1- \left| \left\langle X |F_n| 0 \right\rangle \right|^{2}
\ge \left( \frac{\eta}{200} \right)^{2} = \Omega \left( 1 \right).
\]
Finally, the rejecting probability at the uniformity test is 
at least $P_{c} P_{v} = \Omega \left( 1 \right)$.
\\ \\
\textbf{Case 4.} This is similar to case 1. We show that the rejecting probability at the equality test 
is $\Omega(1)$. 

By Theorem \ref{thm22}, it suffices to show that $D(|\Psi\rangle,|\Phi\rangle)=\Omega(1)$. 
Similarly to case 1, by Theorem~\ref{thm21} and the definitions of $A$ and $A'$ we have
\begin{align*}
D(|\Psi\rangle,|\Phi\rangle) 
& \geq \frac{1}{2} \sum_{i} \left| |\alpha _{i}|^{2} 
  - |\alpha _{i}^{\prime}|^{2} \right|\\
& \geq \frac{1}{2} \sum_{i \in \overline{A} \cap A'} \left( \frac{1}{50K^{3}n} - \frac{1}{100K^{3}n} \right).
\end{align*}
Using the inequality $|\overline{A}\cap A'|\geq 0.1\eta n$, 
which is obtained by $|A|< 0.05\eta n$ and $|A'|\geq 0.15\eta n$ (the condition of case 4),
we can see that 
this value is at least
\[
\frac{1}{2} \times \frac{1}{100K^{3}n} \times 0.1 \eta n 
= \Omega(1).
\]
\\
\textbf{Case 5.} In this case, we show that the rejecting probability $\Omega(\tfrac{1}{n})$ is guaranteed 
by part~a) of the consistency test. 

Fix $i \in C$. Let $c = ArgMax_{j} |\beta_{i,j}|^{2}$ and 
$c' = ArgMax_{j} |\beta_{i,j}^{\prime}|^{2}$. Note that $c\neq c'$ by the definition of $C$. 
By Lemma \ref{lem41}, $|\beta _{i,c}|^{2} \ge \frac{1}{K}$ and 
$|\beta _{i,c'}^{\prime}|^{2} \ge \frac{1}{K}$.
Since $|\alpha_{i}|^{2} \ge \frac{1}{50K^{3}n}$ and $|\alpha_{i}^{\prime}|^{2} \ge \frac{1}{100K^{3}n}$, the probability of measuring $(i,c),(i,c')$ from 
$|\Psi\rangle$ and $|\Phi\rangle$ is
\[
|\alpha_{i}|^{2} |\beta_{i,c}|^{2} |\alpha_{i}^{\prime}|^{2} |\beta_{i,c'}^{\prime}|^{2} 
\ge \frac{1}{50K^{3}n} \frac{1}{K} \frac{1}{100K^{3}n} \frac{1}{K} 
= \frac{1}{5000 K^{8} n^{2}}.
\]
Therefore, the rejecting probability at part a) of the consistency test is 
at least
\[
\sum_{i \in C} |\alpha_{i}|^{2} |\beta_{i,c}|^{2} |\alpha_{i}^{\prime}|^{2} |\beta_{i,c'}^{\prime}|^{2} 
\ge \frac{1}{5000 K^{8} n^{2}} \times 0.01 \eta n \\
= \Omega \left( \frac{1}{n} \right),
\]
where the inequality comes from the condition of case 5.
\\ \\
\textbf{Case 6.} In this case, we show that the rejecting probability $\Omega(\tfrac{1}{n})$ is guaranteed 
by part b) of the consistency test.

For the analysis, we first introduce the set
\[
C' 
=  (\overline{A} \cap \overline{A'})\backslash C
= \left\{ i \in \overline{A} \cap \overline{A'} \biggm| ArgMax_{j} |\beta_{i,j}|^{2} 
= ArgMax_{j} |\beta_{i,j}^{\prime}|^{2} \right\}
\]
and give a lower bound of $|C'|$. 
Since $|A|< 0.05 \eta n$ and $|A'|< 0.15 \eta n$,
\[
|\overline{A} \cap \overline{A'}|
= n - |A\cup A'| \geq n-0.2 \eta n.
\]
Noting that $|C| < 0.01 \eta n$,
\[
|C'| = \left| \overline{A} \cap \overline{A'} \right| - |C| 
> n - 0.2 \eta n - 0.01 \eta n 
= n - 0.21 \eta n.
\]

Next we consider $\left| \left\{ (i,i') \in E(G) \bigm| i \in C'~and~i' \in C' \right\} \right|$.
Since $\left| \left\{ i \bigm| i \notin C' \right\} \right| 
= n - |C'| < n - (n- 0.21 \eta n) = 0.21 \eta n$ and the degree of every vertex is $d$, 
\begin{align*}
\left| \left\{ (i,i') \in E(G) \bigm| i \notin C' ~or~ i' \notin C' \right\} \right|
&\le \left| \left\{ (i,i') \in E(G) \bigm| i \notin C' \right\}\right| 
 + \left| \left\{ (i,i') \in E(G) \bigm| i' \notin C' \right\}\right| \\
&< 0.21 \eta n \times d + 0.21 \eta n \times d \\
&= 0.42 \eta d n.
\end{align*}
Thus we have
\begin{align*}
\left| \left\{ (i,i') \in E(G) \bigm| i \in C' ~and~ i' \in C' \right\} \right| 
&= |E(G)| - \left| \left\{ (i,i') \in E(G) \bigm| i \notin C' ~or~ i' \notin C' \right\}\right| \\
&> |E(G)| - 0.42 \eta d n.
\end{align*}

Now we can regard $\tau(i):=ArgMax_{j} \left| \beta_{i,j} \right|^{2}$ 
as the coloring for vertex $i$. Since the graph $G$ is $(1-\eta)$-unsatisfiable, 
for this coloring the number of edges that do not satisfy the constraint 
is at least $\eta |E(G)|$. We evaluate the probability of getting such an edge 
from the two quantum proofs in the consistency test. Since the degree of 
every vertex is $d$, $|E(G)|=\frac{dn}{2}$. Thus the set 
$\left\{ (i,i') \in E(G) \bigm| i \in C' ~and~ i' \in C' \right\}$ contains at least
\begin{align*}
& \left|\left\{ (i,i') \in E(G) \bigm| i \in C' ~and~ i' \in C' \right\}\right| - (1-\eta)|E(G)|\\
&> |E(G)| - 0.42 \eta d n - (1-\eta)|E(G)| \\
&= 0.08 \eta d n
\end{align*}
edges that do not satisfy the constraint. By Lemma \ref{lem41}, 
the probability of getting each of these edges is
\begin{align*}
|\alpha_{i}|^{2} |\beta_{i,\tau(i)}|^{2} |\alpha_{i'}^{\prime}|^{2} |\beta_{i',\tau(i')}^{\prime}|^{2} 
&\ge \frac{1}{50K^{3}n} \frac{1}{K} \frac{1}{100K^{3}n} \frac{1}{K} \\
&= \frac{1}{5000 K^{8} n^{2}}.
\end{align*}
Therefore, the rejecting probability at part b) of the consistency test
is at least
\[
\frac{1}{5000 K^{8} n^{2}} \times 0.08 \eta d n = \Omega\left(\frac{1}{n}\right).
\]

Now the proof of Proposition \ref{thm3} is completed.
\end{proofof}

\section{Concluding Remarks}
We have shown that there is a QMA protocol for 3-SAT with two quantum proofs 
of logarithmic length such that the gap between completeness and soundness 
is $\Omega(\tfrac{1}{n\polylog{n}})$, which improves the previous works \cite{BT,Bei,CF11,NN10}. 

It seems to be difficult to improve our current gap as long as a protocol similar to the Blier-Tapp one is used.  
Moreover, it can be shown that all the tests of our protocol are necessary. 
For example, we cannot delete the equality test (that uses the swap test) since, without it, 
perfect cheating becomes possible (i.e., there are quantum proofs such that the verifier accepts 
a no-instance with probability 1). This is different from the case of \cite{CD} where 
it was shown that the swap test
can be eliminated while still obtaining the same conclusion as in~\cite{ABDFS}
(namely, that 3-SAT can be verified in quantum polynomial time using $O(\sqrt{n}\polylog{n})$ quantum proofs with logarithmic length). 

\

\noindent
{\bf Acknowledgements.} 
We are grateful to Richard Cleve, Kazuo Iwama, Hirotada Kobayashi, Shuichi Miyazaki, and
Junichi Teruyama for helpful discussions. We also thank anonymous reviewers for helpful comments.  
FLG is partially supported by the Grant-in-Aid for Research Activity Start-up No.~22800006 of the JSPS. 
HN is partially supported by the Grant-in-Aid for Scientific Research (A) Nos.~21244007 and 23246071 
of the JSPS and Young Scientists (B) No.~22700014 of the MEXT.


\end{document}